\definecolor{c30}{rgb}{0,.7,.2}
\definecolor{d30}{rgb}{1,0,0}
\newcommand{\bea}{\begin{eqnarray}}
\newcommand{\eea}{\end{eqnarray}}
\newcommand{\nn}{\nonumber}
\newcommand{\be}{\begin{equation}}
\newcommand{\ee}{\end{equation}}
\newcommand{\qu}{\quad}
\renewcommand{\t}{\tau}
\newcommand{\tb}{\bar{\tau}}
\newcommand{\pa}{\partial}
\newcommand{\tpt}{\tau\frac{\partial}{\partial\tau}}
\newcommand{\btpt}{\bar{\tau}\frac{\partial}{\partial\bar{\tau}}}
\newtheorem{Theorem}{Theorem}[section]
\newtheorem{Corollary}[Theorem]{Corollary}
\newtheorem{Lemma}[Theorem]{Lemma}
\begin{document}
\begin{small}
\title{Non-chiral 2d CFT with integer energy levels}
\author{
M. Ashrafi\thanks{E-mail address: maryam.ashrafi@ph.iut.ac.ir} ~ and F. Loran\thanks{ E-mail address: loran@cc.iut.ac.ir}\\[6pt]
Department of Physics, Isfahan University of Technology, \\ Isfahan 84156-83111, Iran}
\date{}
\maketitle
\begin{abstract}
The partition function of 2d conformal field theory is a modular invariant function. It is known that the partition function of a holomorphic  CFT whose central charge is a multiple of 24 is a polynomial in the Klein function. In this paper, by using the medium temperature expansion we show that every modular invariant partition function can be mapped to a holomorphic partition function whose structure can be determined  similarly.  We use this map to study  partition function of  CFTs with half-integer left and right conformal weights. We show that the corresponding left and right central charges are necessarily  multiples of 4. Furthermore, the degree of degeneracy  of high-energy levels can be uniquely determined in terms of the degeneracy in the low energy states.
\end{abstract}
\end {small}

\section{Introduction}
   An important question in conformal field theory (CFT) is to what extent a theory can be identified in terms of its constraints and symmetries. The bootstrap hypothesis  \cite{polyakov1,migdal,polyakov2} is based on  the crossing symmetry. Recently in 4d CFT the crossing symmetry has been used to obtain  an upper bound on the weights of the fields that appear in the operator product expansion of scalar operators \cite{rat1}-\cite{ag} and a lower bound on the stress tensor central charge \cite{rat4, Poland:2010wg}. Similarly  an upper bound on the scaling dimension of the first scalar operator appearing in the OPE of two quasi-primary scalar operators has been  obtained in two dimensions \cite{rat1}.

     In two dimensions, the infinite dimensional group of the conformal symmetry makes the bootstrap project more efficient. Furthermore, the partition function of a 2d CFT should be  invariant under modular transformations. The  modular group $\mathbf{PSL}(2,\mathbb{Z})$ is the disconnected diffeomorphism group of the torus
    \begin{equation}
    (\tau,\tb)\rightarrow(\tau^\prime,\tb')=\left(\frac{a\t+b}{c\t+d},\frac{a\tb+b}{c\tb+d}\right),\quad\quad\left(\begin{array}{cc} a & b \\ c & d \\ \end{array} \right)\in\mathbf{PSL}(2,\mathbb{Z}),
    \end{equation}
    where $\t=\t_1+i\t_2$ is the complex structure taking value in the upper half plane ($\t_2\ge 0$) and $\tb=\t_1-i\t_2$.
    The generators of the modular group are
    \begin{align}
    &T:\ (\tau,\tb)\rightarrow(\tau +1,\tb+1), &S:\ (\tau,\tb)\rightarrow \left(-\frac{1}{\tau}, -\frac{1}{\bar\tau}\right).
    \end{align}
    Invariance under  $T$-transformation (henceforth $T$-invariance) constrains the spin of states and the difference between the left and right central charges of the conformal field theory. $S$-invariance constrains the density of states and the spectrum of the theory.

    In  \cite{cardy}  the $S$-invariance of partition function has been used  to estimate the density of states in the saddle-point approximation for a unitary CFT. It is seen that the density of states at conformal dimension $h$ grows exponentially with the square  root of $h$ \cite{carlip}.  The Cardy formula is a key ingredient in the AdS$_3$/CFT$_2$ correspondence; it reproduces the Bekenstein-Hawking entropy of the BTZ-black holes \cite{Strominger:1997eq}.\footnote{The asymptotic symmetry group of an  asymptotically AdS$_3$ spacetime with radius $\ell$ in Planck units is given by two copies of the Virasoro algebra whose central charge $\sim \ell$ \cite{Brown-Henneaux}. The validity of the semi-classical gravity requires that $\ell\gg1$.} $S$-invariance has been also used to compute the `logarithmic correction' \cite{carlip} and `the beyond the logarithmic corrections' to the Cardy formula \cite{loran}. In  \cite{Hartman:2014oaa} it is shown that in theories with sparse light spectrum and large central charge, the Cardy formula also works for energies greater than the central charge.

    Recently, the modular invariance of  partition function  has been used in order to obtain an upper bound on conformal dimensions of the primary fields. In \cite{witten} for  holomorphically factorizable models  whose left and right central charges  are  multiples of 24, an upper bound on the lowest primary fields has been obtained,
    \be
    \Delta\leq \min\left(\frac{c_L}{24}+1, \frac{c_R}{24}+1\right).
    \ee
   This upper bound is  saturated in extremal CFTs \cite{hon1, hon2}. Extremal CFT's are promising holographic duals to the  pure gravity with negative cosmological  constant \cite{witten,Maloney:2007ud}. The vacuum state corresponds to the AdS space, and the primary fields above the vacuum correspond to the BTZ black hole.
    Modular invariance is  enough to determine the partition function of an extremal CFT.  For $c=24$ an extremal CFT is known  and its uniqueness has been conjectured \cite{flm1, flm2}. The holomorphic and anti-holomorphic parts of the partition function are modular functions. A modular function can be written in terms of a polynomial in the Klein function $J$ \cite{appostal}.  While  for the other values of the central charge the partition functions are  known  it is not clear whether  such CFTs exist \cite{gaber1}.

    In general, for CFT's in which there is no chiral algebra beyond the Virasoro algebra and $c\gg1$, the following upper bound on the lowest primary operator  has been obtained \cite{heller1}-\cite{ keller2}
    \begin{align}
    \label{i1}
 &    \Delta \leq \frac{c_{\rm tot}}{12}+{\cal O}(1),  &c_{\rm tot}:=c_L+c_R.
    \end{align}
    In fact for asymptotically large central charge this inequality  is valid for  $\Delta _n$  with $n\leq e^{\frac{\pi c}{12}}$  \cite{qual,quall4}.    The upper bound (\ref{i1}) can be computed by using the medium temperature expansion. This method uses the  $S$-invariance of the partition function  at the self-dual point $\t=-\tb=i$ \cite{heller1}.  Considering a small neighborhood of $\tau=-\tb=i$
    \begin{align}
    \label{5}
 &    \tau= i\, e^s, &\bar\tau= -i\, e^s,
    \end{align}
    in the limit  $s\rightarrow 0$, one obtains  an infinite set of constraints on the partition function:
    \begin{equation}
    \label{6}
    \left.\left(\tau \frac{\partial}{\partial \tau}\right)^{N_{R}} \left(\bar{\tau}\frac{\partial}{\partial \bar{\tau}}\right)^{N_{L}} Z(\tau , \bar{\tau}) \right|_{\tau =-\bar\tau=i} = 0 \quad\quad \mbox{ for} \quad N_L+N_R=\mbox{odd}.
    \end{equation}
    Combining the constraints that can be obtained by different selections of $(N_L,N_R)$ leads to  certain  universal constraints on the spectrum \cite{heller1}, \cite{qual}-\cite{benjamin}.

     In this work we  use the medium temperature expansion method in a different manner. We note that Eq.\eqref{6} indicates that for any (smooth) odd function $f(x,y)=-f(-x,-y)$
     \be
     \left.f\left(\tau \frac{\partial}{\partial \tau},\bar{\tau}\frac{\partial}{\partial \bar{\tau}}\right)Z(\tau , \bar{\tau})\right|_{\tau=-\bar\tau=i} =0.
     \ee
     This observation leads to an interesting result: \textit{corresponding to every $S$-invariant non-chiral partition function $Z(\tau , \bar{\tau})$, there exist an $S$-invariant chiral function ${\cal Z}(\tau)$. }  The map $Z(\tau , \bar{\tau}) \to {\cal Z}(\tau)$ can be interpreted as the {\em chiralization} of the partition function. The chiral function corresponding to the non-chiral partition  function can be easily obtain by inserting $\tb=-\t$ in $Z(\tau , \bar{\tau})$. That is,
     \be
     {\cal Z}(\tau) :=Z(\tau , -\tau).
     \ee
  This equation implies that ${\cal Z}(\tau)$ can be obtained by analytic continuation of the `canonical' partition function
    \be
    Z_{\rm canonical}(\beta):=Z(\tau,\bar\tau)\vert_{\tau=-\bar\tau=\frac{i\beta}{2\pi}}.
    \label{canonical}
    \ee
    to the complex $\beta$-plane. The behavior of the chiral function ${\cal Z}(\tau)$ under $T$ transformation depends on the spectrum of the main theory.

     Focusing on a  special  class of CFTs whose primary operators have half integer scaling dimensions (henceforth HI-CFT), we show  that the corresponding chiral partition function  is an eigen-function of $T$ whose eigen-value is $e^{\frac{-i\pi c_{\rm tot}}{12}}$.
      \begin{align}
      &T: {\cal Z}(\tau)\to e^{\frac{-i\pi c_{\rm tot}}{12}}{\cal Z}(\tau).
       \end{align}
       Since ${\cal Z}(\tau)$ is  by construction $S$-invariant,  the identity $(ST)^3=1$ implies that $c_{\rm tot}\in 8\,{\mathbb Z}$. Thus, in such theories  $c_L$ and $c_R$ are inevitably  multiples of 4. We show that the corresponding chiral partition function ${\cal Z}(\tau)$  can be determined in terms of $1+[\frac{k}{3}]$ positive integers.
       \begin{align}
      & {\cal Z}(\tau)=J^{k/3}\sum_{r=0}^{[\frac{k}{3}]}n_r J^{-r},& n_r\in{\mathbb N}.
       \label{degen}
       \end{align}
 Since the degree of degeneracy  of levels in $Z(\tau , \bar\tau)$ and ${\cal Z}(\tau) $ are equivalent (as can be inferred from Eq.\eqref{canonical}), Eq.\eqref{degen} implies that the degree of degeneracy  of  high-energy levels in $Z(\tau , \bar\tau)$ can be uniquely determined in terms of the degeneracy in the low energy states.

    The organization of the paper is  as follows. In sections \ref{sec2}  we review the effect of two constraints on the partition function. One of them is the $T$-invariance and the other one is the simple fact that partition function should be real-valued. In section \ref{sec3} we study the $S$-invariance of partition function and use the medium temperature expansion to obtain the chiralization map. Sections  \ref{sec4} and \ref{sec5} are devoted to the HI-CFT's. We study the chiral partition function ${\cal Z}(\tau)$ in section  \ref{sec4}, and  identify a subclass of HI-CFT partition functions in terms of free-fermions in section  \ref{sec5}. Some technical details are relegated to the appendices. Our main results are summarized in section \ref{Summary}.



    \section{Constraints on the spin values}\label{sec2}
    Consider a two dimensional unitary CFT on a circle of length  $2\pi$. The partition function of the theory at temperature $\frac{1}{\beta}$ and chemical potential $\mu_c$ is as follows
    \be\label{z2}
    Z(\beta , \mu_c):={\rm Tr}\,e^{-\beta H+i\mu P }=\sum_{\Delta,j}\rho(\Delta,j){e}^{-\beta\left(\Delta-\frac{c_{\rm tot}}{24}\right)}e^{i\mu\left(j-\frac{c_{\rm dif}}{24}\right)},
    \ee
    in which $\mu:=\mu_c\beta$, $H$ is the Hamiltonian and $P$ is the momentum on the compact spatial direction.  The eigenvalues of $H$ and $P$ are $\Delta-\frac{c_{\rm tot}}{24}$  and $j-\frac{c_{\rm dif}}{24}$ respectively \cite{fran}.
      \begin{align}
      &c_{\rm tot}:=c_L+c_R, &c_{\rm dif}:=c_L-c_R,
      \end{align}
 where       $c_L$ and $c_R$ are the left and right central charges.
 This partition function can be interpreted as a CFT partition function on a torus whose complex structure is given by
    \begin{align}
    &\t :=\frac{\mu+i\beta }{2\pi},
    &\tb :=\frac{\mu-i\beta}{2\pi}.
    \end{align}
    In this picture, the conformal weights are given by
    \begin{align}
     &h:=\frac{1}{2}(\Delta +j), &\bar{h}:=\frac{1}{2}(\Delta -j),
    \end{align}
    and the partition function can be written as follows.\footnote{In a unitary CFT $h\ge 0$ and $\bar h\ge 0$. Therefore, $-\Delta\le j\le\Delta$.}
    \begin{equation}
    \label{z1}
    Z(\tau,\bar{\tau})=q^{\frac{-c_L}{24}}{\bar q}^{\frac{-c_R}{24}}\sum_{h,\bar{h}=0}\rho(h,\bar{h})\,q^{h}{\bar q}^{\bar{h}},
    \end{equation}
    in which
    \begin{align}
    & q:=    e^{2 i\pi\tau},&\bar q:=e^{-2 i\pi\bar\tau}.
    \end{align}
     Henceforward we assume the following:
    \begin{itemize}
    \item The partition function is invariant under modular transformation;
    \item The spectrum  contains the identity operator $h=\bar h=0$;
    \item The density of states $\rho(h,\bar{h})$ are positive integer numbers;
    \item The partition function is real.
    \end{itemize}
     In the following,  we show that $T$-invariance  indicates that  the spin $j\in\mathbb{Z}$ and $c_{\rm dif}\in24\mathbb{Z}$. Furthermore we show that since the partition function is real-valued, at each energy level, the number of states with spin $j$ and $-j+\frac{c_{\rm dif}}{12}$ are equivalent.

    \subsection{$T$-invariance of partition function}\label{I}
    \indent  Under $T$ transformation
    \be\label{tt1}
     \mu\rightarrow \mu+2\pi,\quad\quad \beta\rightarrow\beta.
     \ee
      Therefore, $T$-invariance of the partition function  requires that
    \be\label{zt1}
    \sum_{\Delta, j}\rho(\Delta , j){e}^{-\beta\left(\Delta-\frac{c_{\rm tot}}{24}\right)} e^{i\mu\left(j-\frac{c_{\rm dif}}{24}\right)}=\sum_{\Delta , j} \rho(\Delta , j) {e}^{-\beta\left(\Delta-\frac{c_{\rm tot}}{24}\right)}e^{i(\mu+2\pi)\left(j-\frac{c_{\rm dif}}{24}\right)}.
    \ee
    For $\mu=0$ Eq.\eqref{zt1} gives
    \be\label{zt2}
    \sum_{\Delta, j}\rho(\Delta , j){e}^{-\beta(\Delta-\frac{c_{\rm tot}}{24})}\left[1-\cos2\pi \Big(j-\frac{c_{\rm dif}}{24}\Big)\right]=0,
    \ee
    \be\label{zt3}
    \sum_{\Delta , j}\rho(\Delta , j){e}^{-\beta(\Delta-\frac{c_{\rm tot}}{24})}\sin2\pi \left(j-\frac{c_{\rm dif}}{24}\right)=0.
    \ee
    The summands in   (\ref{zt2}) are non-negative. Consequently $j-\frac{c_{\rm dif}}{24}$ is necessarily an integer.  The   vacuum state ($j=0$) enforces that $c_{\rm dif}\in 24\mathbb{Z}$. Therefore,  $j\in\mathbb{Z}$. From Eq.(\ref{zt1}) one  verifies that  these conditions are also sufficient.

    \subsection{Partition function is real-valued}\label{II}
   The imaginary  part of the partition function (\ref{z2}) is zero.
    \be\label{r3}
    \sum_\Delta\sum_{ j\in{\cal J}_\Delta}\rho(\Delta , j) {e}^{-\beta \left(\Delta -\frac{c_{\rm tot}}{24}\right)} \sin\left[ \mu\left( j-\frac{c_{\rm dif}}{24}\right)\right]=0,
    \ee
   where ${\cal J}_\Delta\subset[-\Delta,\Delta]$ denotes the set of spins of states with energy  $\Delta$.
  From  the $T$-invariance  we already  know that $j-\frac{c_{\rm dif}}{24} \in \mathbb{Z}$. Using the orthogonality of $\sin\left[ \left(j-\frac{c_{\rm dif}}{24} \right)\mu\right]$ (as a function of $\mu$) in Eq.\eqref{r3} one obtains
    \begin{align}
    \label{r8}
    &\sum_{\Delta}\left[\rho(\Delta,j)-\rho\left(\Delta,-j +\frac{c_{\rm dif}}{12} \right)\right] {e}^{-\beta \left(\Delta -\frac{c_{\rm tot}}{24}\right)}=0.
    \end{align}
   Assuming the ordering  $\Delta_1<\Delta_2<\cdots$, Eq.\eqref{r8} reads
    \be\label{r9}
 \rho(\Delta _1,j)-\rho\left(\Delta _1,-j+\frac{c_{\rm dif}}{12}\right)+\sum_{\Delta =\Delta_2}\left[\rho(\Delta ,j)-\rho\left(\Delta ,-j+\frac{c_{\rm dif}}{12}\right)\right]{e}^{-\beta\left(\Delta -\Delta _1\right)}=0.
    \ee
    By considering the  $\beta\rightarrow\infty$ limit one verifies that
    \be\label{r12}
    \rho(\Delta _1,j)=\rho\left(\Delta _1,-j+\frac{c_{\rm dif}}{12}\right).
    \ee
    Using Eq.\eqref{r12} in Eq.\eqref{r9}, the same argument implies that $\rho(\Delta _2,j)=\rho\left(\Delta _2,-j+\frac{c_{\rm dif}}{12}\right)$. Iteration gives,
    \be\label{r11}
    \rho(\Delta _m , j)=\rho\left(\Delta _m ,-j+\frac{c_{\rm dif}}{12}\right).
    \ee
    Since $j-\frac{c_{\rm dif}}{24}$ is the momentum eigen-value, we conclude that
     \begin{Corollary}\label{Cr}
  A 2d CFT  whose partition function is real-valued and $T$-invariant is  parity even.
  \end{Corollary}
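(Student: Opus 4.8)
The plan is to read the corollary directly off Eq.~\eqref{r11}, which has already been established, by making precise what ``parity even'' means at the level of the spectrum. Parity in a 2d CFT reverses the spatial circle and therefore flips the momentum operator, $P\to -P$. Since the momentum eigenvalue of a state labelled by $(\Delta,j)$ is $j-\frac{c_{\rm dif}}{24}$, a parity transformation sends a state with this eigenvalue to one with eigenvalue $-\left(j-\frac{c_{\rm dif}}{24}\right)$, while leaving the energy $\Delta$ (hence the Hamiltonian eigenvalue) untouched.

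First I would translate this reflection back into the spin label. The spin $j'$ whose momentum eigenvalue equals the reflected value must satisfy $j'-\frac{c_{\rm dif}}{24}=-\left(j-\frac{c_{\rm dif}}{24}\right)$, so that $j'=-j+\frac{c_{\rm dif}}{12}$. Thus parity acts on the quantum numbers as $(\Delta,j)\mapsto\left(\Delta,-j+\frac{c_{\rm dif}}{12}\right)$. Note that $c_{\rm dif}\in24\mathbb Z$ and $j\in\mathbb Z$ (both consequences of $T$-invariance established in Section~\ref{I}) guarantee that $j'$ is again an integer spin, so this map is a genuine involution on the allowed labels.

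Next I would observe that a theory is \emph{parity even} precisely when its spectrum is invariant under this involution, i.e.\ when the degeneracy at each level is unchanged by the reflection,
\be
\rho(\Delta,j)=\rho\!\left(\Delta,-j+\frac{c_{\rm dif}}{12}\right)\quad\text{for all }\Delta,j.
\ee
But this is exactly Eq.~\eqref{r11}, which was derived from the reality of the partition function via the orthogonality argument of Eq.~\eqref{r8} followed by the $\beta\to\infty$ iteration. Hence the corollary follows immediately, with no further computation.

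The one place where care is needed—and the only genuine subtlety—is getting the parity action on the spin right: the reflection is \emph{not} simply $j\to-j$ but $j\to-j+\frac{c_{\rm dif}}{12}$, the shift arising entirely from the Casimir contribution $-\frac{c_{\rm dif}}{24}$ to the momentum eigenvalue. Once this shift is correctly accounted for, the parity-even condition coincides term-by-term with Eq.~\eqref{r11}, and the identification is exact. I expect no obstacle beyond this bookkeeping, since all the analytic work lies in the already-proven reality constraint rather than in the corollary itself.
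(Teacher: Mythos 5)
Your proposal is correct and follows essentially the same route as the paper: the paper derives Eq.~\eqref{r11} from reality plus the orthogonality and $\beta\to\infty$ iteration, and then concludes parity evenness precisely by noting that $j-\frac{c_{\rm dif}}{24}$ is the momentum eigenvalue, so that the reflection $P\to -P$ acts on spins as $j\mapsto -j+\frac{c_{\rm dif}}{12}$. Your write-up merely makes that last identification more explicit than the paper's one-line conclusion, which is fine.
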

    \section{Invariance of partition function under $S$-transformation}\label{sec3}
    $S$-invariance of the partition function,
    \be
    Z(\t , \tb)=Z\left(-\frac{1}{\t}, -\frac{1}{\tb}\right),
    \ee
    implies that  \cite{heller1},\footnote{We assume that the partition function is a smooth function of $\beta$ and $\mu$ i.e. there are no phase transitions.}
     \be
    \left(\tpt\right)^{N_L}\left(\btpt\right)^{N_R}Z(\t , \tb)=(-1)^{N_L+N_R}\left(\omega \frac{\pa}{\pa\omega }\right)^{N_L} \left(\bar{\omega}\frac{\pa}{\pa\bar{\omega}}\right)^{N_R}Z\left(\omega, \bar\omega\right),
    \ee
    in which
    \begin{align}
    &\omega := -\frac{1}{\t},&\bar\omega:=-\frac{1}{\bar\tau}.
    \end{align}
     At the self dual point $\t=\omega=i,$ and $ \tb =\bar\omega=-i$, this condition reads
    \be\label{de1}
    {\hat D_L}^{N_L}{\hat D_R}^{N_R}Z(\t , \tb)\bigg\vert _{\tau =+i,\tb =-i}=0 \qu \mbox{for}\qu N_L+N_R=\mbox{odd},
    \ee
    where ${\hat D_L}=\tpt$ and ${\hat D_R}=\btpt$ are respectively the left and the right dilatation operators. For a holomorphic test function ${\cal F}(\tau)$
    \begin{align}
    \label{de2}
 &    e^{x\hat{D}}{\cal F}(\t)={\cal F}(e^x\t),& x\in {\mathbb C}.
    \end{align}
 Eq.\eqref{de1} implies that for any (smooth) odd function $f(-x_L,-x_R)=-f(x_L,x_R)$
    \be\label{di1}
    f({\hat D_L}, {\hat D_R})Z(\t, \tb)\bigg\vert_{\t =i ,\tb =-i}=0.
    \ee
    Using $f_1({\hat D_L}, {\hat D_R})=\sinh(x_L{\hat D_L})\cosh(x_R{\hat D_R})$ and $f_2({\hat D_L}, {\hat D_R})=\cosh(x_L{\hat D_L})\sinh(x_R{\hat D_R})$  and for $x_L, x_R \in \mathbb{C}$ one verifies that
    \be\label{c11}
    Z(u_L , u_R)+Z\left(u_L,\frac{-1}{u_R}\right)-Z\left(\frac{-1}{u_L} , u_R\right)-Z\left(\frac{-1}{u_L},\frac{-1}{u_R}\right)=0,
    \ee
    \be\label{c12}
    Z(u_L , u_R)-Z\left(u_L,\frac{-1}{u_R}\right)+Z\left(\frac{-1}{u_L} , u_R\right)-Z\left(\frac{-1}{u_L},\frac{-1}{u_R}\right)=0,
    \ee
    where $u_L=ie^{x_L}$ and $u_R=-ie^{ix_R}$.\footnote{Since the growth of $\rho(h,\bar{h})$ in Eq.\eqref{z1} is controlled by the Cardy formula, the partition function
    \be
    Z(u_L,u_R)=e^{-\frac{\pi iu_Lc_L}{12}}e^{\frac{\pi i u_R c_R}{24}}\sum_{h,\bar{h}=0}\rho(h,\bar{h})e^{2\pi i u_L h}{e}^{-2\pi i u_R\bar{h}},
    \ee
 is convergent if the imaginary parts of $u_L$ and $u_R$ are positive and negative respectively. We assume that $Z(u_L,u_R$) gives a  biholomorphic analytic continuation of $Z(\beta,\mu)$ to complex $u_L$ in the upper half-plane and $u_R$ in the lower half-plane. }  An immediate result of the identities \eqref{c11} and \eqref{c12} is
    \begin{Corollary}
    Every $S$-invariant partition function $Z(\tau,\bar\tau)$ is extended $S$-invariant, i.e.
    \begin{align}\label{c3}
   & Z(u_L , u_R)=Z\left(\frac{-1}{u_L},\frac{-1}{u_R}\right),
    \end{align}
    where $u_L$ and $u_R$ are two independent $\mathbb C$ parameters taking value  in the upper half-plane and in the lower half-plane respectively.
    \end{Corollary}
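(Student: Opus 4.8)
The plan is to read \eqref{c3} off directly as a linear combination of the two identities \eqref{c11} and \eqref{c12}, which are already established. Both of these hold as analytic relations once $u_L$ lies in the upper half-plane and $u_R$ in the lower half-plane, the region in which the series defining $Z(u_L,u_R)$ converges and admits the biholomorphic continuation assumed above; so I may treat $u_L$ and $u_R$ as free, independent complex variables in these domains rather than as the specific exponential combinations through which they were introduced.

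First I would simply add \eqref{c11} and \eqref{c12}. In \eqref{c11} the two ``mixed'' terms $Z(u_L,-1/u_R)$ and $Z(-1/u_L,u_R)$ enter with signs $+$ and $-$, whereas in \eqref{c12} they enter with signs $-$ and $+$; hence upon addition they cancel pairwise. What survives is
\be
2\,Z(u_L,u_R)-2\,Z\!\left(\frac{-1}{u_L},\frac{-1}{u_R}\right)=0,
\ee
which is exactly \eqref{c3} after dividing by $2$. For completeness I would note that subtracting \eqref{c12} from \eqref{c11} instead isolates the complementary relation $Z(u_L,-1/u_R)=Z(-1/u_L,u_R)$, so the full content of the two medium-temperature constraints is a pair of factorized $S$-type symmetries, of which \eqref{c3} is the diagonal one.

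There is essentially no analytic obstacle in the algebraic step itself — it is a one-line cancellation. The only point that genuinely needs care, and which I would make explicit, is the range of validity: \eqref{c11}--\eqref{c12} were derived at the self-dual point by applying the odd operators $f_1,f_2$ of $(\hat D_L,\hat D_R)$ to \eqref{di1} and using $e^{x\hat D}\mathcal F(\tau)=\mathcal F(e^x\tau)$, which turns the formal series in $x_L,x_R$ into evaluations of $Z$ at $u_L$ and its $S$-image $-1/u_L$ (and likewise at $u_R$ and $-1/u_R$). The resulting identities are therefore a priori tied to those exponential orbits; it is precisely the assumed convergence and biholomorphy of $Z(u_L,u_R)$ on the upper/lower half-planes that upgrades them to \eqref{c3} for arbitrary independent $u_L,u_R$ in those half-planes. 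This is the step I would treat most carefully, since the extended $S$-invariance rests on the legitimacy of the analytic continuation rather than on the trivial linear algebra.
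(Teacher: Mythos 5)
Your proposal matches the paper's own (implicit) argument: the corollary is stated there as ``an immediate result'' of Eqs.~\eqref{c11} and \eqref{c12}, obtained exactly by adding them so the mixed terms cancel. Your additional remarks on the complementary relation $Z(u_L,-1/u_R)=Z(-1/u_L,u_R)$ and on the analytic-continuation caveat are consistent with the assumptions the paper already makes in its footnote on convergence and biholomorphy.
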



    \subsection{Chiralization of the  partition Function }
    Consider the case $u_L=-u_R=\t$ and define
    \be
    {\cal Z}(\tau):=Z(\tau,-\tau).
    \ee
    Eq.\eqref{z1} (for $q=\bar q$) gives
    \be\label{z3}
    {\cal Z}(\tau)=q^{-\frac{c_{\rm tot}}{24}}\sum_{\Delta=0}\hat\rho(\Delta)q^\Delta,
    \ee
    where\footnote{${\cal Z}(\tau)$ corresponds to the analytic continuation of the canonical partition function $Z_{\rm canonical}(\beta)$ defined in Eq.\eqref{canonical} to the complex $\beta$-plane. It is  known that $Z_{\rm canonical}(\beta)$ is a real analytic function \cite{keller}, thus ${\cal Z}(\tau)$ is well-defined. The $S$-invariance of ${\cal Z}(\tau)$ is also indicated by the $S$-invariance of  $Z_{\rm canonical}(\beta)$.}
    \be
    \hat \rho(\Delta):=\sum_{j\in {\cal J}_\Delta}\rho(\Delta,j).
    \ee
    From (\ref{c3}) we learn that the function ${\cal Z}({\t})$ is invariant under $S$-transformation.  In summary,
    \begin{Corollary}\label{C1}
    Corresponding to every $S$-invariant  partition function $Z(\tau,\bar\tau)$, there is a an  $S$-invariant chiral function ${\cal Z}(\tau):=Z(\tau,-\tau)$.
    \end{Corollary}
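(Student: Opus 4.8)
The plan is to derive the $S$-invariance of ${\cal Z}$ as an immediate specialization of the extended $S$-invariance \eqref{c3}, restricted to the ``chiral slice'' $u_L=-u_R=\tau$. The content of \eqref{c3} is that $Z(u_L,u_R)=Z(-1/u_L,-1/u_R)$ for \emph{independent} parameters $u_L$ in the upper half-plane and $u_R$ in the lower half-plane. Since ${\cal Z}(\tau):=Z(\tau,-\tau)$ is precisely the restriction of this two-variable function to $u_L=\tau$, $u_R=-\tau$, the only things to check are that this restriction stays inside the domain on which \eqref{c3} is valid, and then to read off both sides of the identity as values of ${\cal Z}$.

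First I would verify the domain condition. Writing $\tau=\tau_1+i\tau_2$ with $\tau_2>0$, the parameter $u_L=\tau$ lies in the upper half-plane while $u_R=-\tau=-\tau_1-i\tau_2$ lies in the lower half-plane, so \eqref{c3} applies along this slice. Substituting $u_L=\tau$ and $u_R=-\tau$ into \eqref{c3} then gives
\be
Z(\tau,-\tau)=Z\!\left(-\tfrac{1}{\tau},-\tfrac{1}{-\tau}\right)=Z\!\left(-\tfrac{1}{\tau},\tfrac{1}{\tau}\right).
\ee
The left-hand side is ${\cal Z}(\tau)$ by definition. For the right-hand side I would evaluate the definition of ${\cal Z}$ at the argument $-1/\tau$: namely ${\cal Z}(-1/\tau)=Z(-1/\tau,-(-1/\tau))=Z(-1/\tau,1/\tau)$, which matches exactly. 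Hence ${\cal Z}(\tau)={\cal Z}(-1/\tau)$, which is precisely the $S$-invariance of ${\cal Z}$.

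To justify that ${\cal Z}$ is genuinely a \emph{chiral} (holomorphic in $\tau$ alone) function, I would appeal to the expansion \eqref{z3}: setting $q=\bar q$ in \eqref{z1} collapses the double sum into the single power series $q^{-c_{\rm tot}/24}\sum_\Delta\hat\rho(\Delta)q^\Delta$ in $q=e^{2\pi i\tau}$, so ${\cal Z}$ depends on $\tau$ only. I expect the only real subtlety --- not a genuine obstacle --- to be the legitimacy of the restriction to the chiral slice: one must confirm that $u_L=-u_R$ is an admissible interior point of the biholomorphic continuation rather than a locus where the identity degenerates. This is guaranteed because \eqref{c3} holds for all independent $(u_L,u_R)$ in the upper/lower half-planes, hence in particular along the slice $u_L=-u_R$; the well-definedness of the analytic continuation defining ${\cal Z}$ follows from the real-analyticity of the canonical partition function noted in the footnote to \eqref{canonical}.
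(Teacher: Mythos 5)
Your proposal is correct and follows the paper's own route: the paper likewise specializes the extended $S$-invariance \eqref{c3} to the slice $u_L=-u_R=\tau$ and reads off ${\cal Z}(\tau)={\cal Z}(-1/\tau)$, with the chiral expansion \eqref{z3} obtained by setting $q=\bar q$ in \eqref{z1}. Your added checks (that the slice stays in the domain of validity and that the right-hand side is literally ${\cal Z}(-1/\tau)$) merely make explicit what the paper leaves implicit.
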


    We call the map
    \be
    \label{map}
    \mathit{ch} : Z(\t , \tb)\rightarrow{\cal Z}({\t} ),
    \ee
    the chiralization map  and  ${\cal Z}({\t} )$ the $\mathit{ch}$-image of $Z(\t , \tb)$.
     Table \ref{T1} shows some example of the known partition function and the corresponding $\mathit{ch}$-images.
    \begin{table}[htp]

            \renewcommand*{\arraystretch}{2}
            \begin{tabular}{| c | c |c| }
                \hline
                { Model}  &  $Z(\t,\tb)$ & { $\mathit{ch}$-image}\\
                \hline
                { Ising model} & $\frac{1}{2}\left(\bigg\vert\frac{\theta_2}{\eta}\bigg\vert+\bigg\vert\frac{\theta_3}{\eta}\bigg\vert+\bigg\vert\frac{\theta_4}{\eta}\bigg\vert\right)$ &$\frac{1}{2}\left(\frac{\theta_2}{\eta}+\frac{\theta_3}{\eta}+\frac{\theta_4}{\eta}\right)$\\
                \hline
               {Free boson} & $\frac{1}{\sqrt{\t _i}}\frac{1}{\vert\eta(\t)\vert ^2}$& $\frac{1}{\sqrt{-i\t}}\frac{1}{(\eta(\t)) ^2}$\\
                \hline
               { Free boson on a circle of $r=1$} & $\frac{1}{2}\left(\bigg\vert\frac{\theta_2}{\eta}\bigg\vert^2+\bigg\vert\frac{\theta_3}{\eta}\bigg\vert^2+\bigg\vert\frac{\theta_4}{\eta}\bigg\vert^2\right)$&$\frac{1}{2}\left[\left(\frac{\theta_2}{\eta}\right)^2+\left(\frac{\theta_3}{\eta}\right)^2+\left(\frac{\theta_4}{\eta}\right)^2\right]$\\
                \hline
            \end{tabular}
            \centering \caption{\label{T1} \footnotesize  Examples of non-chiral partition functions and the corresponding $\mathit{ch}$-images.}
        \end{table}
        %
    \section{CFT's with half-integer conformal weights}\label{sec4}
    In this section we investigate a family of CFT's in which $\Delta\in {\mathbb Z}$. Since $j\in {\mathbb Z}$, the corresponding conformal weights are half-integers. Hence we call such a CFT an {\em HI-CFT}. In the following $Z(\tau,\bar\tau)$ and ${\cal Z}(\tau)$ denote the partition function of an HI-CFT and the corresponding $\mathit{ch}$-image respectively.

    From Eq.\eqref{z3} one verifies that
    \be
    T:\ {\cal Z}(\tau)\to e^{-i\pi\frac{c_{\rm tot}}{12}}{\cal Z}(\tau).
    \label{T-phase}
    \ee
    Since ${\cal Z}(\tau)$ is $S$-invariant, using the identity $(TS)^3=1$  one obtains
    \be\label{st1}
    e^{-2\pi  i\frac{c_{\rm tot}}{8}}=1.
    \ee
    Consequently,
    \be\label{c1}
     c_{\rm tot}\in 8 \,{\mathbb N}.
     \ee
     From the $T$-invariance of $Z(\tau,\bar\tau)$ we have learned that $c_{\rm dif}\in 24\,\mathbb{Z}$. Therefore,
     \begin{Corollary}\label{C2}
     For an HI-CFT
      \begin{align}
      \label{c2}
 &      c_L\in\,4\,{\mathbb N}, & c_R\in 4\,{\mathbb N}.
      \end{align}
      \end{Corollary}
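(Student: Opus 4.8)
The plan is to reduce the claim to the two divisibility facts already in hand and then solve a trivial linear system. The $T$-invariance analysis of Section~\ref{sec2} gives $c_{\rm dif}=c_L-c_R\in 24\,\mathbb{Z}$, while the $(TS)^3=1$ argument that produced Eq.\eqref{c1} gives $c_{\rm tot}=c_L+c_R\in 8\,\mathbb{N}$. All that remains is to transport these congruences on the sum and on the difference back to $c_L$ and $c_R$ individually.

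Concretely, I would set $c_{\rm tot}=8m$ with $m\in\mathbb{N}$ and $c_{\rm dif}=24n$ with $n\in\mathbb{Z}$, and invert the elementary identities
\begin{align}
& c_L=\tfrac12\left(c_{\rm tot}+c_{\rm dif}\right), & c_R=\tfrac12\left(c_{\rm tot}-c_{\rm dif}\right).
\end{align}
Substituting yields $c_L=4m+12n=4(m+3n)$ and $c_R=4m-12n=4(m-3n)$, so both central charges are manifestly integer multiples of $4$. To promote ``multiple of $4$'' to membership in $4\,\mathbb{N}$ I would invoke unitarity, which forces $c_L\ge 0$ and $c_R\ge 0$, equivalently $m+3n\ge 0$ and $m-3n\ge 0$; this finishes the argument.

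I expect no genuine obstacle here: the mathematical content is supplied entirely by the two prior results, and the remaining work is simply the observation that, although $24$ and $8$ are the natural moduli for $c_{\rm dif}$ and $c_{\rm tot}$, their common consequence in the $(c_L,c_R)$ basis is divisibility by $4$. The only point worth a second glance is the halving step, but it is harmless: both $8m$ and $24n$ are themselves multiples of $8$, so $c_{\rm tot}\pm c_{\rm dif}$ is a multiple of $8$ and dividing by $2$ lands safely on a multiple of $4$ with no parity obstruction.
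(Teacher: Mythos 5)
Your proposal is correct and follows essentially the same route as the paper: the paper likewise just combines $c_{\rm tot}\in 8\,\mathbb{N}$ from Eq.\eqref{st1} with $c_{\rm dif}\in 24\,\mathbb{Z}$ from $T$-invariance and reads off the result for $c_L=\tfrac12(c_{\rm tot}+c_{\rm dif})$ and $c_R=\tfrac12(c_{\rm tot}-c_{\rm dif})$. Your write-up merely makes the (trivial) arithmetic and the unitarity/positivity step explicit, which the paper leaves implicit.
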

     Now we are ready to obtain the basis for ${\cal Z}({\t} )$. Let's start with  $c_L, c_R\in\,12 \mathbb{N}$. In that case $c_{\rm tot}\in 24{\mathbb Z}$ and ${\cal Z}({\t})$ defined by Eq.\eqref{z3} is a well-defined modular invariant meromorphic function in the upper half plane. Therefore it can be given as a  polynomial in the Klein function  $J$ \cite{appostal},
     \be
     {\cal Z}=\sum_{r=-\frac{c_{\rm tot}}{24}}^0 a_r J^{-r}.
     \ee
      The Klein function can be written  in terms of the Jacobi Theta functions $\theta_i(\t)$ $ (i=2,3,4)$ and the Dedekind  function $\eta(\tau)$.
    \bea
    \label{g1}
    J&=&\mathfrak{j}^3\\
    \label{j1}
    &=&q^{-1}+744+196884\, q+\cdots,
    \eea
    where
    \bea
    \label{g11}
    \mathfrak{j} (\t)&:=&\frac{1}{2}\left[ \left({\frac{\theta_2(\t)}{\eta(\t)}}\right)^{8}+\left({\frac{\theta_3(\t)}{\eta(\t)}}\right)^{8}+\left({\frac{\theta_4(\t)}{\eta(\t)}}\right)^{8}\right] \nn\\
     &=&q^{\frac{-1}{3}}\left(1+248\,q+\cdots\right).
     \eea
   In the following we show that for $c_{\rm tot}\in 8{\mathbb N}$, ${\cal Z}(\tau)$  can be written in terms of a polynomial in $\mathfrak{j}$.
    \begin{Lemma}\label{t41}
    Let $f^{(r)}\left(\lbrace a^{(r)}\rbrace,\t\right)$ be an $S$-invariant function with  Fourier expansion
    \begin{align}\label{g31}
    &f^{(r)}\left(\lbrace a^{(r)}\rbrace,\t\right)=q^{\frac{-p}{3}}\left[\sum _{n=-r}^{0} a_n^{(r)} q^n+\sum _{n=1}^{\infty } a_n^{(r)} q^{n}\right],&p\in\{0,1,2\}.
    \end{align}
    in the upper half  $\tau$-plane.  Then
    \begin{itemize}
    \item[a.]  $f^{(r)}\left(\lbrace a^{(r)}\rbrace,\t\right)$  is $T^3$-invariant.
    \item[b.] It is a polynomial in $\mathfrak{j}$.
    \end{itemize}
    \end{Lemma}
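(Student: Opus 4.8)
\emph{Plan.} Throughout write $q=e^{2\pi i\tau}$ and recall that, since $f^{(r)}$ is $S$-invariant with a $q$-expansion, it is holomorphic in the upper half plane with its only possible singularity at the cusp. Part (a) is immediate: under $T:\tau\to\tau+1$ one has $q\to q$, so the bracketed Laurent series in \eqref{g31} is $T$-invariant and the only nontrivial transformation comes from the prefactor $q^{-p/3}=e^{-2\pi i p\tau/3}$, which acquires the phase $e^{-2\pi i p/3}$. Hence $T:f^{(r)}\to e^{-2\pi i p/3}f^{(r)}$, and applying $T$ three times gives the phase $e^{-2\pi i p}=1$ because $p\in\mathbb Z$; this is $T^3$-invariance.

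For part (b) I would run a constructive pole-subtraction induction with $\mathfrak{j}$ as the single building block. First I record the transformation law of $\mathfrak{j}$ from its definition \eqref{g11}. Using $\theta_2(-1/\tau)=\sqrt{-i\tau}\,\theta_4$, $\theta_3(-1/\tau)=\sqrt{-i\tau}\,\theta_3$, $\theta_4(-1/\tau)=\sqrt{-i\tau}\,\theta_2$ and $\eta(-1/\tau)=\sqrt{-i\tau}\,\eta$, the factor $\sqrt{-i\tau}$ cancels in each $\theta_i/\eta$ and $S$ merely permutes the three terms, so $\mathfrak{j}$ is $S$-invariant. Using $\theta_2(\tau+1)=e^{i\pi/4}\theta_2$, $\theta_3(\tau+1)=\theta_4$, $\theta_4(\tau+1)=\theta_3$ and $\eta(\tau+1)=e^{i\pi/12}\eta$, each of the three terms picks up the common phase $e^{-2\pi i/3}$, so $T:\mathfrak{j}\to e^{-2\pi i/3}\mathfrak{j}$ (consistent with $\mathfrak{j}^3=J$ being fully modular invariant, cf.\ \eqref{g1}). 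In particular, setting $d:=3r+p$, the power $\mathfrak{j}^{d}=q^{-d/3}\bigl(1+O(q)\bigr)$ is $S$-invariant and transforms under $T$ by $e^{-2\pi i d/3}=e^{-2\pi i p/3}$, exactly as $f^{(r)}$ does.

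Now I induct on the pole order $d=3r+p\ge 0$. The leading term of $f^{(r)}$ is $a^{(r)}_{-r}q^{-d/3}$, and since $\mathfrak{j}^{d}$ has leading coefficient $1$, the difference $g:=f^{(r)}-a^{(r)}_{-r}\,\mathfrak{j}^{d}$ is again $S$-invariant, transforms under $T$ by the same phase $e^{-2\pi i p/3}$, and has a $q$-expansion whose lowest power is $q^{-(d-3)/3}$; thus $g$ is a function of the same type \eqref{g31} with strictly smaller pole order. Iterating, after finitely many subtractions of the successive leading coefficients times $\mathfrak{j}^{d},\mathfrak{j}^{d-3},\dots$, I reduce to pole order $p\in\{0,1,2\}$, i.e.\ to $h=q^{-p/3}(a_0+a_1q+\cdots)$. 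For $p=0$ this $h$ is a bounded holomorphic $\mathbf{PSL}(2,\mathbb Z)$-modular function, hence the constant $a_0=a_0\,\mathfrak{j}^{0}$; for $p\in\{1,2\}$ I subtract $a_0\,\mathfrak{j}^{p}$ to cancel the remaining pole. In all cases the assembled subtracted terms form a polynomial in $\mathfrak{j}$ all of whose exponents are $\equiv p\pmod 3$.

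The one genuinely nontrivial step is showing that the final remainder vanishes. After the last subtraction I am left with a $g$ that is $S$-invariant, transforms under $T$ by a primitive cube root of unity $e^{-2\pi i p/3}$, is holomorphic in the upper half plane, and whose $q$-expansion contains only strictly positive powers of $q$ (it starts at $q^{1-p/3}$). Here I would use the cube trick: $g^{3}$ is then $S$- and $T$-invariant (the phase cubes to $e^{-2\pi i p}=1$), holomorphic in the upper half plane, and its $q$-expansion has only positive integer powers, so $g^{3}$ descends to a holomorphic function on the compact modular curve $X(1)\cong\mathbb P^{1}$ that vanishes at the cusp; such a function is identically zero, whence $g=0$. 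The main obstacle is exactly this vanishing statement, since a function transforming by a nontrivial phase under $T$ is not itself a $\mathbf{PSL}(2,\mathbb Z)$-modular function; cubing is the device that returns one to the honest modular setting where the principle ``holomorphic and bounded $\Rightarrow$ constant'' applies. Putting the pieces together yields $f^{(r)}=\sum_{0\le k\le d,\ k\equiv p\,(3)}c_k\,\mathfrak{j}^{k}$, a polynomial in $\mathfrak{j}$.
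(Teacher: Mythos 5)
Your proposal is correct and follows essentially the same route as the paper: iteratively subtracting the leading singular term $a^{(r)}_{-r}\mathfrak{j}^{3r+p}=a^{(r)}_{-r}\,\mathfrak{j}^{p}J^{r}$ to lower the pole order, and then killing the holomorphic remainder that vanishes at the cusp by cubing it to obtain a genuine modular function. Your treatment is somewhat more explicit about the $T$-phases and the congruence of exponents mod $3$, but the decomposition and the key vanishing argument coincide with the paper's proof.
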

    \begin{proof}
    $T^3$-invariance is obvious. Eq.\eqref{j1} and Eq.\eqref{g11} imply that there exist $\lbrace a^{(r-1)}\rbrace$ such that
    \be\label{g311}
    q^{\frac{-p}{3}}\sum _{n=-r}^{\infty} a_n^{(r)} q^n=a_{-r}^{(r)}\,\mathfrak{j}^p\, J^r+q^{\frac{-p}{3}}\sum _{n=-r+1}^{\infty} a_n^{(r-1)} q^n.
    \ee
    Therefore,
    \be\label{g312}
    f^{(r)}\left(\lbrace a^{(r)}\rbrace,\t\right)=a_{r}^{(r)}\,\mathfrak{j}^p\, J^{r}+f^{(r-1)}\left(\lbrace a^{(r-1)}\rbrace,\t\right).
    \ee
   The order of the poles of $f^{(r)}\left(\lbrace a^{(r)}\rbrace,\t\right)$ and $f^{(r-1)}\left(\lbrace a^{(r-1)}\rbrace,\t\right)$ are  $r$ and $r-1$ respectively.
   The $S$-invariance of $f^{(r)}\left(\lbrace a^{(r)}\rbrace,\t\right)$, $\mathfrak{j}$ and $J$ imply that  $f^{(r-1)}\left(\lbrace   a^{(r-1)}\rbrace,\t\right)$ is also $S$-invariant. By iteration one obtains
    \be\label{g313}
    f^{(r)}\left(\lbrace a^{(r)}\rbrace,\t\right)=\mathfrak{j}^p\left[a_{-r}^{(r)}J^{r}+a_{-(r-1)}^{(r-1)}J^{r-1}+\cdots +a_{0}^{(0)}\right]+f^{(-1)}\left(\lbrace a^{(-1)}\rbrace,\t\right),
    \ee
    where
    \be\label{g32}
    f^{(-1)}\left(\lbrace a^{(-1)}\rbrace,\t\right)=q^{1-\frac{p}{3}}\sum_{m\ge1} a^{(-1)}_m q^{m-1}.
    \ee
    The function $\left[f^{(-1)}\left(\lbrace a^{(-1)}\rbrace,\t\right)\right]^3$ is modular invariant. It has no pole in the upper half plane and is zero at     $\tau=i\infty$. Thus it is zero in the upper half plane.
    \end{proof}
    \begin{Corollary}\label{C3}
    The  $\mathit{ch}$-image of the  HI-CFT partition function with total central charge $c_{tot}=8k$, has an expansion in terms of      $\mathfrak{j} $ as follows
     \begin{align}
     \label{ch1}
    &{\cal Z}(\t)=\mathfrak{j}^{k}\sum_{r=0}^{[k/3]}n_r J^{-r}, &n_r\in {\mathbb N}.
    \end{align}
    \end{Corollary}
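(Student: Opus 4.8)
The plan is to recognize $\mathcal{Z}(\tau)$ as an instance of the function $f^{(r)}$ of Lemma \ref{t41} and then translate the conclusion into the $\mathfrak{j}^{k}$-basis. First I would set $c_{\rm tot}=8k$ and write $k=3[k/3]+p$ with $p\in\{0,1,2\}$, so that the prefactor in Eq.\eqref{z3} becomes $q^{-c_{\rm tot}/24}=q^{-k/3}=q^{-p/3}\,q^{-[k/3]}$. Reindexing the sum in Eq.\eqref{z3} by $n=\Delta-[k/3]$ then reproduces exactly the Fourier expansion \eqref{g31}, with $r=[k/3]$, residue $p$, and coefficients $a_n^{([k/3])}=\hat\rho(n+[k/3])$; in particular the most singular coefficient is $a_{-[k/3]}^{([k/3])}=\hat\rho(0)=1$, the vacuum degeneracy. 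Since $\mathcal{Z}(\tau)$ is $S$-invariant by Corollary \ref{C1}, all hypotheses of Lemma \ref{t41} are met.

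Applying the Lemma yields $\mathcal{Z}(\tau)=\mathfrak{j}^{p}\sum_{s=0}^{[k/3]}c_s J^{s}$ for suitable constants $c_s$. To reach the stated form I would use $J=\mathfrak{j}^3$ from Eq.\eqref{g1}: writing $\mathfrak{j}^{p}J^{s}=\mathfrak{j}^{p+3s}=\mathfrak{j}^{k}J^{-([k/3]-s)}$ and setting $r=[k/3]-s$, $n_r:=c_{[k/3]-r}$, the polynomial rearranges into $\mathcal{Z}(\tau)=\mathfrak{j}^{k}\sum_{r=0}^{[k/3]}n_r J^{-r}$, which is Eq.\eqref{ch1}. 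Integrality of the $n_r$ is then straightforward: the $1+[k/3]$ coefficients are fixed by equating the principal part of $\mathcal{Z}$, i.e. the coefficients of $q^{-[k/3]-p/3},\dots,q^{-p/3}$ (which correspond to $\Delta=0,\dots,[k/3]$), with $\hat\rho(0),\dots,\hat\rho([k/3])$. Because the $q$-expansions of $\mathfrak{j}$ and $J$ have integer coefficients with unit leading coefficient (Eqs.\eqref{j1}, \eqref{g11}), this is a triangular linear system with $1$'s on the diagonal, so it inverts over $\mathbb{Z}$; hence each $n_r\in\mathbb{Z}$ and is determined entirely by the low-energy degeneracies $\hat\rho(0),\dots,\hat\rho([k/3])$. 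This is exactly the mechanism by which the high-energy degeneracies get pinned down by the low-energy data.

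The main obstacle is the positivity $n_r\ge 0$, which does not follow from the formal inversion above. The triangular recursion expresses each $n_r$ as $\hat\rho(r)$ minus contributions assembled from $n_0,\dots,n_{r-1}$ and the (positive) subleading coefficients of $\mathfrak{j}^{k},\mathfrak{j}^{k}J^{-1},\dots$, so a priori a subtraction could drive $n_r$ negative. To close this I would invoke the physical input that $\hat\rho(\Delta)=\sum_{j\in{\cal J}_\Delta}\rho(\Delta,j)\ge 0$ for every $\Delta$, and leverage the non-negativity of the \emph{entire} spectrum (not just the first $1+[k/3]$ levels) against the $\mathfrak{j}$-basis to obtain $n_r\in\mathbb{N}$. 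Identifying precisely which unitary spectra guarantee this — as opposed to merely producing integer $n_r$ — is the delicate step, and is where the assumption that the density of states consists of positive integers must be used in an essential way.
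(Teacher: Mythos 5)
Your proposal is correct and follows essentially the same route as the paper: Corollary \ref{C3} is obtained there as an immediate application of Lemma \ref{t41} to ${\cal Z}(\tau)=q^{-k/3}\sum_{\Delta}\hat\rho(\Delta)q^{\Delta}$ with $p=k-3[k/3]$ and $r=[k/3]$, followed by the same rewriting $\mathfrak{j}^{p}J^{[k/3]}=\mathfrak{j}^{k}$. The positivity issue you flag at the end is genuine, but the paper does not close it either---it effectively treats the $n_r$ as nonnegative integers parametrizing the low-energy spectrum (identifying $n_0$ with the vacuum degeneracy and leaving existence of such CFTs open in a footnote)---so your triangular-inversion discussion of integrality is, if anything, more explicit than what the paper provides.
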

   The degeneracy of the vacuum state is given by $n_0$. In the following we assume that $n_0=1$.  Eq.\eqref{ch1} shows that ${\cal Z}({\t} )$, and consequently the number of states  with energy $\Delta$ i.e. $\hat \rho(\Delta)$ can be uniquely determined if $c_{\rm tot}$ and the integers $n_r$, or equivalently, the low-energy (i.e. $\Delta\leq \left[\frac{k}{3}\right]$) density of states are given.\footnote{ The existence and the uniqueness of such CFT's is an open problem.}

    Finally, consider an HI-CFT whose $\mathit{ch}$-image  ${\cal Z}(\t)$ is extremal, i.e. ${\cal Z}(\t)=q^{-k/3}\left[1+{\cal O}(q)\right]$. In that case,  the coefficients $n_r$ can be uniquely determined in terms of the central charge. Furthermore, the scaling dimension of the first primary field after identity is $\Delta_1=\frac{c_{\rm tot}}{24}+1$, which is in agreement with the upper bound given in Eq.\eqref{i1}.

    \subsection{AdS/CFT correspondence}
    It is known that the Cardy formula reproduces the Bekenstein-Hawking entropy at $\Delta\gg1$. In \cite{witten} it has been observed that for $k\in3\,{\mathbb N}$,\footnote{In our conventions, $c_{\rm tot}= 8k$ while in \cite{witten} $c_{\rm tot}=24 k$.} the  number of primary fields is given by the Cardy formula
    \begin{align}
    &\hat\rho(\Delta)\cong e^{{\mathcal{ S}}(k,\Delta)},&\mathcal{S}(k,\Delta):=4\pi\sqrt{\frac{k\left(\Delta-\frac{k}{3}\right)}{3}}.
        \end{align}
    Therefore it is natural to assume that the  primary fields correspond to the micro-states of the BTZ black hole.

    In Table \ref{tab2} the Fourier expansion of the $\mathit{ch}$-image is given for $c_{\rm tot}=8, 16,24$. The coefficients of the expansions determine the density of state $\hat \rho(\Delta)$ which equals the number of states with energy $\Delta$ and spin $j\in[-\Delta,\Delta]$.
    \begin{table}[htp]
                \renewcommand*{\arraystretch}{2}
            \begin{tabular}{| c | c |c| }
            \hline
            central charge & $\mathit{ch}$-Image of partition function\\
                \hline
                $k=1$  &$ \mathfrak{j}=q^{\frac{-1}{3} } (1+248q+4124q^2+34752q^3+\cdots)$\\
                \hline
               $k=2$&  $\mathfrak{j}^2=q^{\frac{-2}{3} } (1+496q+69752q^2+2115008q^3+\cdots)$ \\
                \hline
               $k=3$& $ J+n_1=q^{-1} \left[1+(744+n_1)q+196884q^2+21493760q^3+\cdots\right]$\\
               \hline
               \end{tabular}

                \centering \caption{\label{tab2} \footnotesize The $\mathit{ch}$-image of the HI-CFT partition function with $c_{\rm tot}\in\{8,16,24\}$.}
        \end{table}
        For $k=1,2,3$ the first high energy state (i.e.  $\Delta=1+\left[\frac{k}{3}\right]$) has weight $\Delta=1, 1, 2$ respectively. It is an interesting observation that the corresponding number of states can be estimated by the Cardy formula.

    \section{A basis for   HI-CFT partition function}\label{sec5}
    In the previous section we have observed that the $\mathit{ch}$-image of the HI-CFT partition function
    is a polynomial in $\mathfrak{j}$. Motivated by the fact that $\mathfrak{j}$ is the $\mathit{ch}$-image of
    $\frac{1}{2}\sum_{i=1}^3\left|\frac{\theta_i }{\eta} \right|^8$ in this section we study a class of HI-CFT's whose partition functions
    can be given as a polynomial in  $\sqrt{\frac{\theta_i }{\eta}}$  and $\sqrt{\bar{\frac{\theta_i }{\eta}}}$.

    The functions $\sqrt{\frac{\theta_i }{\eta}}$ have the following Fourier expansion.
    \be\label{g140}
    \sqrt{\frac{\vartheta_2}{\eta}}=q^{\frac{-1}{48}+\frac{1}{16}}\sum_{n=0}^{\infty}C^{(\frac{1}{16})}_{n}q^{n},
    \ee
    \be\label{g150}
    \sqrt{\frac{\vartheta_3}{\eta}}=q^{\frac{-1}{48}}\left(\sum _{n=0}^{\infty} C^{(0)}_{n} q^{n} +\sum _{n=1}^{\infty}C^{(\frac{1}{2})}_{n} q^{n+\frac{1}{2}}\right),
    \ee
    \be\label{g160}
    \sqrt{\frac{\vartheta_4}{\eta}}=q^{\frac{-1}{48}}\left(\sum _{n=0}^{\infty} C^{(0)}_{n}q^{n} -\sum _{n=1}^{\infty}C^{(\frac{1}{2})}_{n} q^{n+\frac{1}{2}}\right),
    \ee
    where
    \begin{align}
    &C^{(i)}_n\in{\mathbb N}, &i=0,\frac{1}{16}, \frac{1}{2}.
    \end{align}
    The $S$-transformation of  the Dedekind function $\eta$ and the Theta functions are as follows.
     \begin{align}\label{g3}
    &\theta _2\rightarrow (-i\t)^{1/2} \theta _4,& \theta _4\rightarrow (-i\t)^{1/2} \theta
    _2,\\
    & \theta _3\rightarrow (-i\t)^{1/2} \theta _3,& \eta \rightarrow (-i\t)^{1/2}
    \eta.
    \end{align}
    $T$-transformation of these functions is given by,
    \begin{align}\label{g4}
    &    \theta _2\rightarrow e^{i\pi /4}\theta _2,&&  \theta _3\leftrightarrow \theta _4,
    &    &\eta\rightarrow e^{i\pi /12}\eta.
    \end{align}
    Since an HI-CFT only contains primary fields with half integer scaling dimension, from Eqs.(\ref{g140}-\ref{g4})
    one infers that the corresponding partition function is a  polynomial in $x$, $y$ and $z$ defined as
    follows.
    \be\label{g14}
    x:=\left(\sqrt{\frac{\vartheta_2}{\eta}}\right)^8=q^{\frac{-1}{6}+\frac{1}{2}} C(q),
    \ee
    \be\label{g15}
    y:=\left(\sqrt{\frac{\vartheta_3}{\eta}}\right)^8= q^{\frac{-1}{6}}\left[A(q)+q^{\frac{1}{2}}B(q)\right],
    \ee
    \be\label{g16}
    z:=\left(\sqrt{\frac{\vartheta_4}{\eta}}\right)^8= q^{\frac{-1}{6}}\left[A(q)-q^{\frac{1}{2}}B(q)\right],
    \ee
    where $A(q)$, $B(q)$ and $C(q)$ are polynomials in $q$ with positive integer coefficients.
    The functions $x,y$ and $z$ are not independent. They are related through the standard relations between the Theta
    functions and Dedekind function.
    \be\label{g17}
    x-y+z=0,
    \ee
    \be\label{g170}
     xyz=16.
    \ee
     By using Eq.\eqref{g170}  and  the transformation rules
    \begin{align}
    \label{xt}
    &S:\ y\rightarrow y,&\ x\leftrightarrow z,\\
    \label{xt1}
    &T:\ x\rightarrow e^{2i\pi /3}x,&& y\rightarrow e^{-i\pi /3} z,&&z\rightarrow e^{-i\pi /3} y,
    \end{align}
    one can show that the most general modular covariant combination of $x,y, z$ can be written as
    follows.
    \begin{eqnarray}\label{b130}
    R_{a,b,c,d}&=&x^{c}\bar{x}^{a}\left(y^{d}\bar{z}^{b}+\alpha z^{d}\bar{y}^b\right)\\\nonumber
    &+&y^{c}\bar{y}^{a}\left(\beta z^{d}\bar{x}^{b}+\tilde{\beta} x^{d}\bar{z}^b\right)\\\nonumber
    &+&z^{c}\bar{z}^{a}\left(\gamma x^{d}\bar{y}^{b}+\tilde{\gamma } y^{d}\bar{x}^b\right),
    \end{eqnarray}
    where $a, b, c$ and $d$ are some positive integer number and $\alpha, \beta, \tilde{\beta}, \gamma$ and $\tilde{\gamma}$ are complex number.
    By covariance we mean that
    \be\label{g70}
    S:\ R_{a,b,c,d}\to e^{i\sigma } R_{a,b,c,d},
    \ee
    \be\label{g80}
    T:\ R_{a,b,c,d}\to e^{i\delta } R_{a,b,c,d},
    \ee
     where $\sigma$ and $\delta$ are real numbers. Eq.\eqref{g70} implies that
    \be\label{g7}
    \tilde{\gamma}=e^{i\sigma },\quad\quad\tilde{\gamma}^2=e^{2i\sigma }=1,\quad\quad \gamma =e^{i\sigma }\alpha,\quad\quad \tilde{\beta}=e^{i\sigma }\beta.
    \ee
    Using (\ref{g80}) and \eqref{g7} one obtains,
    \begin{align}\label{g8}
  &  e^{i\delta}=(-1)^{a+c}e^{\frac{i\pi}{3}(a+b-c-d) }\alpha,& \beta= (-1)^{a+c+d},\\&
  \tilde{\gamma}=(-1)^{b+d}\alpha,& \alpha ^2=1.
    \end{align}
   Consequently,
    \begin{eqnarray}\label{b13}
    R^{\pm}_{a,b,c,d}&=&x^{c}\bar{x}^{a}\left((-y)^{d}\bar{z}^{b}\pm z^{d}(-\bar{y})^b\right)\\\nonumber
    &+&(-y)^{c}(-\bar{y})^{a}\left(z^{d}\bar{x}^{b}\pm x^{d}\bar{z}^b\right)\\\nonumber
    &+&z^{c}\bar{z}^{a}\left(x^{d}(-\bar{y})^{b}\pm (-y)^{d}\bar{x}^b\right),
    \end{eqnarray}
    where we have dropped an overall phase $(-1)^d$. $R^{-}_{a,b,c,d}$ and $R^+_{a,b,c,d} $ are respectively odd and even under $S$-transformation.
    \be
    S R^{\pm}_{a,b,c,d}=\pm R^{\pm}_{a,b,c,d}.
    \ee
    $S$-invariance of the partition function implies that $Z(\t,\bar{\t})$ should be an even function in $R^-_{a,b,c,d}$.
     In Appendix \ref{apb} we show that $R^-_{a,b,c,d} R^-_{a',b',c',d'}$ is a linear combination of $R^+_{a,b,c,d}$.
     Hence, we concentrate on polynomials in $R^+_{a,b,c,d}$ and drop the    $+$ sign for simplicity.

     Noting that
    \be\label{g21}
    \mathfrak{j} =\frac{1}{2}(x^2+y^2+z^2),
    \ee
    one can use Eq.\eqref{g17} to show that
    \be\label{g210}
    \mathfrak{j}=x^2+yz=z^2+xy=y^2-xz.
    \ee
    These identities together with  Eq.(\ref{g170}) result in the
    following recurrence relations.
    \be\label{b14}
    R_{a+2,b,c,d}=\bar{\mathfrak{j}}R_{a,b,c,d}-16R_{a-1,b,c,d},
    \ee
    \be\label{b15}
    R_{a,b+2,c,d}=\bar{\mathfrak{j}} R_{a,b,c,d}-16R_{a,b-1,c,d},
    \ee
    \be\label{b17}
    R_{a,b,c+2,d}=\mathfrak{j}R_{a,b,c,d}-16R_{a,b,c-1,d},
    \ee
    \be\label{b18}
    R_{a,b,c,d+2}=\mathfrak{j}R_{a,b,c,d}-16R_{a,b,c,d-1}.
    \ee
        Eqs.\eqref{b14}-\eqref{b18} show that every $R_{a,b,c,d}$  is a polynomial   in $\mathfrak{j}$, $\bar{\mathfrak{j}}$,
         \be\label{b35}
     \mathfrak{h}:=\frac{1}{2}\left(|x|^2+|y|^2+|z|^2\right),
     \ee
     and
     \be\label{b36}
     {\mathfrak{k}}:=x^2\bar{x}-y^2\bar{y}+z^2\bar{z}.
     \ee
      To show it,  we first consider the chiral function
    \be\label{g12}
    R_{c,d}
    :=R_{0,0,c,d}=x^c\left((-y)^{d}+z^{d})+(-y)^c(z^{d}+x^{d})+z^c(x^{d}+(-y)^{d}\right).
    \ee
    Noting that
    \be\label{g23}
    R_{0,0}=6,\quad\quad R_{1,0}=0, \quad\quad R_{2,0}=4\mathfrak{j},
    \ee
    \be\label{g24}
    R_{0,1}=0,\quad\quad R_{1,1}=-2\mathfrak{j}, \quad\quad R_{2,1}=48,
    \ee
    \be\label{g25}
    R_{0,2}=4\mathfrak{j},\quad\quad R_{1,2}=48, \quad\quad R_{2,2}=2\mathfrak{j} ^2,
    \ee
    one verifies that $\mathfrak{j}$ is the single generator of   $R_{c,d}$.

     In appendix \ref{apa} we show that
    \be\label{b19}
    R_{a,b,0,1}=- R_{b,a,1,0}-R_{a,b,1,0},
    \ee
    \be\label{b20}
    R_{a,b,1,1}=-\mathfrak{j}R_{a,b,0,0}+ R_{b,a,2,0},
    \ee
    \be\label{b21}
    R_{a,b,0,2}=2\mathfrak{j} R_{a,b,0,0}-R_{a,b,2,0}- R_{b,a,2,0},
    \ee
    \be\label{b22}
    R_{a,b,2,2}=\mathfrak{j}^2R_{a,b,0,0}-\mathfrak{j} R_{b,a,2,0}-16 R_{b,a,1,0},
    \ee
    \be\label{b24}
    R_{a,b,1,2}=-\mathfrak{j} R_{a,b,0,1}+ R_{b,a,2,1}.
    \ee
    Thus all of $R_{a,b,c,d}$ can be obtained in terms of $R_{a,b,c',d'}$ where $\{c',d'\}\in \{(1,0),(2,0),(2,1)\}$. Using the identity
    \be\label{b25}
    \bar{R}_{a,b,c,d}= R_{c,d,a,b},
    \ee
     one can also determine $R_{2,0,1,0}$, $R_{2,1,1,0}$, and $R_{2,1,2,0}$  in terms of $R_{1,0,2,0}$, $R_{1,0,2,1}$, and $R_{2,0,2,1}$ respectively. Using Eq.(\ref{b20}) one can determine  $R_{1,1,1,0}$, $R_{1,1,2,0}$ and $R_{1,1,2,1}$. Similarly, Eq.(\ref{b22}) can be used to compute  $R_{2,2,1,0}$, $R_{2,2,2,0}$ and $R_{2,2,2,1}$.
    Therefore all that we need to compute $R_{a,b,c,d}$ are the following functions.
    \be
    \begin{array}{ll}
     R_{1,0,1,0}=-2R_{0,1,1,0}=4\mathfrak{h},&
    R_{0,1,2,0}=\bar{ R}_{0,2,1,0}=-\frac{1}{2}R_{1,0,2,0}=-\mathfrak{k},\\
      R_{1,2,1,0}= -\bar{R}_{1,0,2,1} =2\,\bar{\mathfrak{j}}\,\mathfrak{h},&    R_{0,1,2,1}= 0,\\
    R_{2,0,2,0}=\frac{8h^2}{3}+\frac{4|\mathfrak{j}|^2}{3},&
    R_{0,2,2,0}=-\frac{4\mathfrak{h}^2}{3}+\frac{10|\mathfrak{j}|^2}{3},\\
    R_{2,0,2,1}=-\mathfrak{j}\bar{{\mathfrak{k}}}+32\mathfrak{j},&         R_{0,2,2,1}=32\bar{\mathfrak{j}},\\
    R_{2,1,2,1}=\frac{8}{3}\left|\mathfrak{j}\right|^2\mathfrak{h}-\frac{8\mathfrak{h}^3}{3}+6(16)^2.
     \end{array}
     \ee
     In summery every $R_{a,b,c,d}$ is a  polynomial in $\mathfrak{h}, {\mathfrak{k}}$ and $\bar{{\mathfrak{k}}}$ as follows
    \be\label{b42}
   \mathfrak{R} =g_0+g_1\mathfrak{h}+g_2 \mathfrak{h}^2+g_3 \mathfrak{h}^3+g_4{\mathfrak{k}}+g_5\bar{{\mathfrak{k}}},
    \ee
    where $g_i=g_i(\mathfrak{j},\bar{\mathfrak{j}})$ are polynomials in $\mathfrak{j}$ and ${\bar{\mathfrak{j}}}$.

     Since the HI-CFT partition function is modular invariant, we investigate the invariance of $ \mathfrak{R}$ under $T$ transformation. $\mathfrak{h}$ is modular invariant. $\mathfrak{j}$ and ${\mathfrak{k}}$ are eigen-functions of $T$ with eigen-values $e^{\frac{-2\pi i}{3}}$  and $e^{\frac{2\pi i}{3}}$ respectively.
    In order to determine $g_i, (i=0\cdots 5)$, we write them as follows
    \begin{align}\label{gi1-1}
 &   g_i= F_i^{(0)} +\left(F_i^{(1)}\mathfrak{j}+F_i^{(2)}\mathfrak{j}^2+{\rm h.c.}\right),&i=0,1,2,3,\\
  &g_4=F_4^{(0)} +F_4^{(1)}\mathfrak{j}+F_4^{(2)}\mathfrak{j}^2+G_4^{(1)}\bar{\mathfrak{j}}+G_4^{(2)}{\bar{\mathfrak{j}}}^2,\\
  \label{gi1-3}
  &g_5={\bar{g_4}}.
    \end{align}
     where $F_i^{(a)}$ and $G_4^{(a)}$ are polynomials in $ \left|\mathfrak{j}\right|^2$, $J$ and $\bar{J}$.
    In writing Eqs.\eqref{gi1-1}-\eqref{gi1-3} we have  noted that $\mathfrak{R}$ as a partition function should be real-valued. Using Eqs.\eqref{gi1-1}-\eqref{gi1-3} in Eq.(\ref{b42}) and the identity $T \mathfrak{R}+T^2 \mathfrak{R}=2 \mathfrak{R}$ one obtains
    \be\label{gi2}
   \left[\sum_{i=0}^3  \left(\mathfrak{j}F_i^{(1)} +\mathfrak{j}^2 F_i^{(2)}\right) \mathfrak{h}^i+\left(F_4^{(0)}+F_4^{(2)}{\mathfrak{j}}^2+G_4^{(1)}\bar{\mathfrak{j}}\right)\mathfrak{k}\right]+{\rm c.c.}=0.
    \ee
    Therefore, every HI-CFT partition function can be written as
     \be\label{b45}
    Z(\t,\tb)=\sum_{i=0}^3 F_i^{(0)} \mathfrak{h}^i +\left[\left(F_4^{(1)}\mathfrak{j}+G_4^{(2)}{\bar{\mathfrak{j}}}^2\right)\mathfrak{k}+{\rm c.c.}\right].
    \ee
 Noting that the $\mathit{ch}$-image of $\mathfrak{h}$ is $\mathfrak{j}$ and the $ch$-image of $\mathfrak{k}$ equals -48,\footnote{The $\mathit{ch}$-image of $\mathfrak{k}$ is $\frac{1}{2}R_{0,0,3,0}$ which can be easily computed by using Eq.\eqref{b17}.} one easily verifies that the $\mathit{ch}$-image of $Z(\t,\tb)$ is a function of $\mathfrak{j}$ in agreement with corollary \ref{C3}.
     \subsection{Examples of HI-CFT Partition function}
     In this section we study HI-CFT's  with $c_{\rm tot}=8,16$.\footnote{$c_{\rm dif}\in24{\mathbb Z} $ implies that the  corresponding left and right central charges are  $c_L=c_R=4$ and $c_L=c_R=8$ respectively.}
     \begin{itemize}
     \item  $c_{\rm tot}=8$.  In this case there is only one partition function
     \be\label{ex1}
    Z(\t , \tb)=\mathfrak{h}=\frac{1}{2} \frac{{\vartheta_2}^4(\t)\bar{{\vartheta_2}}^4(\tb)+{\vartheta_3}^4(\t)\bar{{\vartheta_3}}^4(\tb)+{\vartheta_4}^4(\t)\bar{{\vartheta_4}}^4(\tb)}{\eta ^4(\t){\bar{\eta}}^4(\tb)},
     \ee
     which corresponds to  8 right-handed  and 8 left-handed fermions. The corresponding $\mathit{ch}$-image is
    \be\label{ex2}
    {\cal Z}({\t})=\frac{1}{2} \frac{{\vartheta_2}^8+{\vartheta_3}^8+{\vartheta_4}^8}{\eta ^8(\t)}=\mathfrak{j}.
    \ee
      \item  $c_{\rm tot}=16$.
      In this case the partition function is not unique.
      \be\label{ex3}
     Z(\t,\tb)=\frac{1}{a+b}\left(a \mathfrak{h}^2+b\left| \mathfrak{j}\right|^2\right).
     \ee
     The $\mathit{ch}$-image of $Z(\t,\tb)$ is $\mathfrak{j}^2$ (independent of $a$ and $b$). The factor $\frac{1}{a+b}$ indicates that there is a single vacuum state. The coefficients $a$ an $b$ should be  determined in such a way that the density of states are positive integers. By inspecting the first few terms in the  Fourier expansion of $Z(\t,\tb)$, one can obtain the following necessary condition.
     \be\label{ex39}
     \frac{384a}{a+b}=n, \quad\quad  \frac{56a+248b}{a+b}=m,
     \ee
  in which $m$ and $n$ are nonnegative integers. This gives
    \begin{align}\label{ex391}
  &   m,n\in8{\mathbb Z}& 2m'+n'=62,
    \end{align}
  where $m':=\frac{m}{8}$ and $n':=\frac{n}{8}$. Using Eq.\eqref{ex391} in Eq.\eqref{ex3} one obtains
     \be\label{ex31}
     Z(\t,\tb)=\frac{1}{24}\left[(31-m') \mathfrak{h}^2+(m'-7)\left|\mathfrak{j}\right|^2\right].
     \ee
    For $7\leq m'\leq 31$ the energy densities are obviously positive integers.  We have not been able to exclude the partition functions corresponding to  $0\le m'\leq 6$. Therefore, we are optimistic that there should be 32 different HI-CFT's with $c_{\rm tot}=16$.

     \end{itemize}
     \section{Summary}\label{Summary}
     In this work we have studied modular invariant partition functions of unitary CFT's whose conformal weights  are half-integers, hence  HI-CFT's. By using the medium temperature expansion we have obtained a chiralization map which maps every  $S$-invariant non-chiral partition function to an $S$-invariant chiral partition function. We have used the chiralization map to show that the left and right central charges of an HI-CFT are multiples of 4. Furthermore, we have shown that the partition function after chiralization can be written as a polynomial in $j=J^{1/3}$, where $J$ is the Klein function. In this way we have realized that the degree of degeneracy of the high energy levels $\Delta>\left[ \frac{c_L+c_R}{24}\right]$ can be uniquely determined in terms $1+\left[\frac{c_L+c_R}{24}\right]$ integers corresponding to the degeneracy in the low energy states.

     We have identified a class of HI-CFT's whose partition functions can be given in terms of the Jacobi Theta function $\theta_i$ and the Dedekind function $\eta$. In Eq.\eqref{b45} we have given  the most general form of such partition functions.
     \subsection*{Acknowledgments}
     We are grateful to M.M. Sheikh-Jabbari for reading the  manuscript and for his useful comments.

     \appendix
      \section{$S$-invariant combinations of $R_{a,b,c,d}$}\label{apb}
     The multiplication rule for  $R^{-}_{a,b,c,d}$ can be obtained as follows.
     \begin{eqnarray}\label{2m}
    R^{-}_{a,b,c,d}R^{-}_{a',b',c',d'}&=&R^{+}_{a+a',b+b',c+c',d+d'}\\\nonumber
    \\\nonumber
    &-&\Big[x^{c+c'}(-y)^{d'}z^{d}\bar{x}^{a+a'}(-\bar{y})^{b}\bar{z}^{b'}+ x^{c+c'}(-y)^{d}z^{d'}\bar{x}^{a+a'}(-\bar{y})^{b'}\bar{z}^{b}\\\nonumber
    &+&x^{d'}(-y)^{c+c'}z^{d}\bar{x}^{b}(-\bar{y})^{a+a'}\bar{z}^{b'}+x^{d}(-y)^{c+c'}z^{d'}\bar{x}^{b'}(-\bar{y})^{a+a'}\bar{z}^{b}\\\nonumber
    &+& x^{d}(-y)^{d'}z^{c+c'}\bar{x}^{b'}(-\bar{y})^{b}\bar{z}^{a+a'}+x^{d'}(-y)^{d}z^{c+c'}\bar{x}^{b}(-\bar{y})^{b'}\bar{z}^{a+a'}\Big]\\\nonumber
    \\\nonumber
    &+&\Big[x^{c}(-y)^{c'+d}z^{d'}\bar{x}^{a+b'}(-\bar{y})^{a'}\bar{z}^{b}+ x^{c}(-y)^{d'}z^{c'+d}\bar{x}^{a+b'}(-\bar{y})^{b}\bar{z}^{a'}\\\nonumber
    &+&x^{d'}(-y)^{c}z^{c'+d}\bar{x}^{b}(-\bar{y})^{a+b'}\bar{z}^{a'}+x^{d'}(-y)^{d+c'}z^{c}\bar{x}^{b}(-\bar{y})^{a'}\bar{z}^{a+b'}\\\nonumber
    &+& x^{c'+d}(-y)^{c}z^{d'}\bar{x}^{a'}(-\bar{y})^{a+b'}\bar{z}^{b}+x^{c'+d}(-y)^{d'}z^{c}\bar{x}^{a'}(-\bar{y})^{b}\bar{z}^{a+b'}\Big]\\\nonumber
    \\\nonumber
    &+&\Big[x^{c'}(-y)^{c+d'}z^{d}\bar{x}^{a'+b}(-\bar{y})^{a}\bar{z}^{b'}+ x^{c'}(-y)^{d}z^{c+d'}\bar{x}^{a'+b}(-\bar{y})^{b'}\bar{z}^{a}\\\nonumber
    &+&x^{d}(-y)^{c'}z^{c+d'}\bar{x}^{b'}(-\bar{y})^{a'+b}\bar{z}^{a}+x^{d}(-y)^{d'+c}z^{c'}\bar{x}^{b'}(-\bar{y})^{a}\bar{z}^{a'+b}\\\nonumber
    &+& x^{c+d'}(-y)^{c'}z^{d'}\bar{x}^{a}(-\bar{y})^{a'+b}\bar{z}^{b'}+x^{c+d'}(-y)^{d}z^{c'}\bar{x}^{a}(-\bar{y})^{b'}\bar{z}^{a'+b}\Big]\\\nonumber
    \\\nonumber
    &-&\Big[x^{c+d'}(-y)^{c'+d}\bar{x}^{a}(-\bar{y})^{a'}\bar{z}^{b+b'}+ x^{c+d'}z^{c'+d}\bar{x}^{a}(-\bar{y})^{b+b'}\bar{z}^{a'}\\\nonumber
    &+&(-y)^{c+d'}z^{c'+d}\bar{x}^{b+b'}(-\bar{y})^{a}\bar{z}^{a'}+(-y)^{d+c'}z^{c+d'}\bar{x}^{b+b'}(-\bar{y})^{a'}\bar{z}^{a}\\\nonumber
    &+& x^{c'+d}(-y)^{c+d'}\bar{x}^{a'}(-\bar{y})^{a}\bar{z}^{b+b'}+x^{c'+d}z^{c'+d}\bar{x}^{a'}(-\bar{y})^{b+b'}\bar{z}^{a}\Big]\\\nonumber
    \\\nonumber
    &-&\Big[x^{c}(-y)^{c'}z^{d+d'}\bar{x}^{a+b'}(-\bar{y})^{a'+b}+ x^{c}(-y)^{d+d'}z^{c'}\bar{x}^{a+b'}\bar{z}^{a'+b}\\\nonumber
    &+&x^{d+d'}(-y)^{c}z^{c'}(-\bar{y})^{a+b'}\bar{z}^{a'+b}+x^{d+d'}(-y)^{c'}z^{c}(-\bar{y})^{a'+b}\bar{z}^{a+b}\\\nn
    &+& x^{c'}(-y)^{c}z^{d+d'}\bar{x}^{a'+b}(-\bar{y})^{a+b'}+x^{c'}(-y)^{d+d'}z^{c}\bar{x}^{a'+b}\bar{z}^{a+b'}\Big].
    \end{eqnarray}
     We have separated the above terms in 5 combinations. We show that each combination is an $R^{+}$. It is clear that these terms have the following structure.
    \begin{eqnarray}
    I(a,b,c,d,a',b',c',d')&=&x^{a}(-y)^{b}z^{c}\bar{x}^{a'}(-\bar{y})^{b'}\bar{z}^{c'}+x^{a}(-y)^{c}z^{b}\bar{x}^{a'}(-\bar{y})^{c'}\bar{z}^{b'}\nn\\\nonumber
    &+&x^{b}(-y)^{a}z^{c}\bar{x}^{b'}(-\bar{y})^{a'}\bar{z}^{c'}+x^{c}(-y)^{a}z^{b}\bar{x}^{c'}(-\bar{y})^{b'}\bar{z}^{a'}\\\nonumber
    &+&x^{b}(-y)^{c}z^{a}\bar{x}^{b'}(-\bar{y})^{c'}\bar{z}^{a'}+x^{c}(-y)^{b}z^{a}\bar{x}^{c'}(-\bar{y})^{b'}\bar{z}^{a'}.
    \end{eqnarray}
    where
    \begin{equation}\label{3m}
    I(a,b,c,a',b',c')=I(b,a,c,b',a',c')=I(c,b,a,c',b',a').
    \end{equation}
    In order to proceed we need to classify different orderings of $a,b,c$ and  $(a',b',c')$. In general, there are nine of them as follows
    \begin{eqnarray}
    1.\quad\quad a&\leq & b,c \quad\quad\quad a'\leq b' ,c',   \\\nonumber
    2.\quad\quad b&\leq &a , c\quad\quad\quad b' \leq  a', c', \\\nonumber
     3.\quad \quad c&\leq &a,b \quad\quad\quad c' \leq a', b',  \\\nonumber
    4.\quad \quad a&\leq &b , c\quad\quad\quad b' \leq  a', c', \\\nonumber
    5.\quad\quad  b&\leq & a,c \quad\quad\quad a'\leq b' ,c',   \\\nonumber
    6.\quad\quad  c&\leq & a,b \quad\quad\quad a'\leq b' ,c',   \\\nonumber
    7.\quad\quad  a&\leq &b ,c \quad\quad\quad c' \leq a', b',  \\\nonumber
    8.\quad\quad b&\leq &a ,c \quad\quad\quad c' \leq a', b',  \\\nonumber
    9.\quad\quad c&\leq &a,b\quad\quad\quad b' \leq  a', c', \\\nonumber
    \end{eqnarray}
    We first consider cases 1, 4 and 7.
    \begin{equation}\label{41m}
    1.\, I(a,b,c,a',b',c')=16^{a+a'}(-1)^{b+c}\sum_{k=0}^{c-a}\frac{(c-a)!}{k!(c-a-k)!}R^{+}_{b'-a',c'-a',b-a+k,c-a-k},
    \end{equation}
    \begin{equation}\label{51m}
    4.\  I(a,b,c,a',b',c')=(-16)^{a+b'}R^{+}_{c'-b',a'-b',c-a,b-a},
    \end{equation}
    \begin{equation}\label{61m}
    7.\  I(a,b,c,a',b',c')=(-16)^{a+c'}R^{+}_{b'-c',a'-c',b-a,c-a}.
    \end{equation}
    By using Eq.(\ref{3m}) and  Eq.(\ref{41m}) and switching  $\left(a\leftrightarrow b,a'\leftrightarrow b'\right)$, and $\left(a\leftrightarrow c,a'\leftrightarrow c'\right)$, one can resolve the  cases $2$ and $3$. Similarly, the cases 5 and 6  and the cases 8 and 9 can be obtained from the cases 4 and 7 respectively.
    \section{Basis for $R_{a,b,c,d}$}\label{apa}
      In this appendix we prove Eqs.\eqref{b19}-\eqref{b24}.

     By definition,
     \bea\label{b190}
    R_{a,b,0,1}&=&\bar{x}^a\left((-y)\bar{z}^b+z(-\bar{y})^b\right)+(-\bar{y})^a(x\bar{z}^b+z\bar{x}^b)\nn\\
    &+&\bar{z}^a\left((-y)\bar{x}^b+x(-\bar{y})^b\right).
    \eea
    Thus, the identity \eqref{g17} gives  Eq.\eqref{b19}.  Similarly,
         \bea\label{b200}
    R_{a,b,1,1}&=&x\bar{x}^a\left((-y)\bar{z}^b+z(-\bar{y})^b\right)+(-y)(-\bar{y})^a(x\bar{z}^b+z\bar{x}^b)\nn\\&+&z\bar{z}^a\left((-y)\bar{x}^b+x(-\bar{y})^b\right).
    \eea
    Therefore  Eq.\eqref{b20} is a result of the identity \eqref{g210}. Eq.\eqref{b21} can be verified by using Eq.\eqref{g21} in
    \bea\label{b210}
    R_{a,b,0,2}&=&\bar{x}^a\left(y^2\bar{z}^b+z^2(-\bar{y})^b\right)+(-\bar{y})^a(x^2\bar{z}^b+z^2\bar{x}^b)\nn\\&+&\bar{z}^a\left(y^2\bar{x}^b+x^2(-\bar{y})^b\right),
    \eea
    Eq.\eqref{g210} and Eq.\eqref{g170} give
       \be\label{z}
    z^4=z^2(\mathfrak{j}-xy)=\mathfrak{j} z^2-16z.
    \ee
    Therefore,
    \be\label{xy}
    x^2y^2=\mathfrak{j}^2-\mathfrak{j}z^2-16z.
    \ee
    Similarly,
    \be\label{xz}
    x^2z^2=\mathfrak{j}^2-\mathfrak{j}y^2+16y,
    \ee
    \be\label{yz}
    y^2z^2=\mathfrak{j}^2-\mathfrak{j}x^2-16x.
    \ee
    Using Eqs.\eqref{xy}-\eqref{yz} in
    \bea\label{b220}
    R_{a,b,2,2}&=& x^2\bar{x}^a\left(y^2\bar{z}^b+z^2(-\bar{y})^b\right)+y^2(-\bar{y})^a(x^2\bar{z}^b+z^2\bar{x}^b)\nn\\&+&z^2\bar{z}^a\left(y^2\bar{x}^b+x^2(-\bar{y})^b\right),
    \eea
    one obtains Eq.\eqref{b22}.    Finally, using the identity
    \be
    x\bar{x}^a\left[y^2\bar{z}^b+ z^2(-\bar{y})^b\right]=\bar{x}^a\left[(-y)\left(-\mathfrak{j}+z^2\right)\bar{z}^b+ \left(-\mathfrak{j}+y^2\right) z(-\bar{y})^b\right],
    \ee
    in
    \bea\label{b230}
    R_{a,b,1,2}&=& x\bar{x}^a\left(y^2\bar{z}^b+z^2(-\bar{y})^b\right)+(-y)(-\bar{y})^a(x^2\bar{z}^b+z^2\bar{x}^b)\nn\\&+&z\bar{z}^a\left(y^2\bar{x}^b+x^2(-\bar{y})^b\right),
    \eea
    one obtains Eq.\eqref{b24}.




 \end{document}